\newtheorem{theorem}{\bf Theorem}
\newtheorem{proposition}{Proposition}
\begin{document}

% Your \newcommands below (if there are any):

\oddsidemargin 16.5mm
\evensidemargin 16.5mm

\thispagestyle{plain}

\begin{center}
{\large \sc  Applicable Analysis and Discrete Mathematics}

{\small available online at  http:/$\!$/pefmath.etf.rs }
\end{center}

\noindent{\small{\sc  Appl. Anal. Discrete Math.\ }{\bf x} (xxxx),
xxx--xxx.} \hfill{\scriptsize doi:10.2298/AADMxxxxxxxx}

\vspace{5cc}
\begin{center}

{\large\bf  SIMPLE PARAMETRIZATION METHODS FOR GENERATING ADOMIAN POLYNOMIALS
\rule{0mm}{6mm}\renewcommand{\thefootnote}{}%Enter at least one, but not more than 3 MSCs.
% First entered MSC will be a primary one, others (at most 2) will be secondary.
\footnotetext{\scriptsize 2010 Mathematics Subject Classification. 41A10, 49M27.

\rule{2.4mm}{0mm}Keywords and Phrases. Adomian decomposition method, Adomian polynomials, formal power series, nonlinear operators.
}}

\vspace{1cc}
{\large\it K. K. Kataria, P. Vellaisamy}

\vspace{1cc}
\parbox{24cc}{{\small

In this paper, we discuss two simple parametrization methods for calculating Adomian polynomials for several nonlinear operators, which utilize the orthogonality of functions $e^{inx}$, where $n$ is an integer. Some important properties of Adomian polynomials are also discussed and illustrated with examples. These methods require minimum computation, are easy to implement, and are extended to multivariable case also. Examples of different forms of nonlinearity, which includes the one involved in the Navier Stokes equation, is considered. Explicit expression for the $n$-th order Adomian polynomials are obtained in most of the examples. 
%
%Avoid citations in the  abstract, but if they are necessary, do not use labels from the
%main body of the article. For example, instead of  writing "formula (2)" , you
%should reproduce formula (2), or describe it in words. Instead of writing
%"the result  obtained in [1]" , you should write "the result obtained in
%{\sc J. A. Baker:}  {\it Isometries in normed spaces,}
%Amer. Math. Monthly {\bf 78} (1971), 655--658".

}}
\end{center}

\vspace{1cc}

% The paper should have at least two sections

\vspace{1.5cc}
\begin{center}
{\bf 1. INTRODUCTION}
\end{center}

The Adomian decomposition method (ADM) (see {\small{\sc Adomian}} \cite{Adomian1986,Adomian1989,Adomian1994}) provides an analytical approximate solution for nonlinear functional equation in terms of a rapidly converging series, without linearization, perturbation or discretization. Consider a functional equation 
\begin{equation}\label{1.1}
u=f+L(u)+N(u),
\end{equation}
where $L$ and $N$ are respectively, linear and nonlinear operators and $f$ is a known function. In ADM, the solution $u(x,t)$ of (\ref{1.1}) is decomposed in the form of an infinite series,
\begin{equation*}\label{1.2}
u(x,t)= ‎‎\sum_{n=0}^{\infty‎}‎u_n(x,t).
\end{equation*}
Further, the nonlinear function $N(u)$ is assumed to admit the representation 
\begin{equation*}\label{1.3}
N(u)=‎‎\sum_{n=0}^{\infty‎}A_n(u_0,u_1,\ldots,u_n),
\end{equation*}
where $A_n$'s are $n$-th order Adomian polynomials. In the linear case, $N(u)=u$, $A_n$ simply reduces to $u_n$.

Adomian's method is simple in principle, but involves tedious calculations for obtaining Adomian polynomials. {\small{\sc Adomian}} \cite{Adomian1986} gave a method for determining these Adomian polynomials, by parametrizing $u(x,t)$ as
\begin{equation}\label{1.4}
u_\lambda(x,t)=‎‎\sum_{n=0}^{\infty‎}u_n(x,t)\lambda^n 
\end{equation}
and assuming $N(u_\lambda)$ to be analytic in $\lambda$, which decomposes as
\begin{equation}\label{1.5}
N(u_\lambda)=‎‎\sum_{n=0}^{\infty‎}A_n(u_0,u_1,\ldots,u_n)\lambda^n.
\end{equation}
Hence, the Adomian polynomials $A_n$ are given by
\begin{equation}\label{1.6}
A_n(u_0,u_1,\ldots,u_n)=\left.\frac{1}{n!}\frac{\partial^n N(u_\lambda)}{\partial \lambda^n} \right|_{\lambda=0},\ \forall\ n\in\mathbb{N}_0,
\end{equation}
where $\mathbb{N}_m=\left\{n\in\mathbb{N}\cup\{0\}:n\geq m\right\}$ and $\mathbb{N}$ denotes the set of positive integers.

{\small{\sc Rach}} \cite{Rach1984415} suggested the following formula for determining Adomian polynomials:
\begin{align}\label{1.7}
A_0(u_0)&=‎‎N(u_0),\nonumber\\
A_n(u_0,u_1,...,u_n)&=‎\sum_{k=1}^{n‎}C(k,n)N^{(k)}(u_0),\ \forall\ n\in\mathbb{N},
\end{align}
\noindent where 
\begin{equation*}\label{1.8}
C(k,n)=\underset{\sum_{j=1}^nk_j=k\ ,\ k_j\in\mathbb{N}_0}{\sum_{\sum_{j=1}^{n}jk_j=n}}\prod_{j=1}^{n}\frac{u_j^{k_j}}{k_j!}.
\end{equation*}

{\small{\sc Wazwaz}} \cite{Wazwaz200033} suggested a new algorithm in which after separating $A_0=N(u_0)$ from other terms of the Taylor series expansion of the nonlinear function $N(u)$, we collect all terms of the expansion obtained such that the sum of the subscripts of the components of $u(x,t)$ in each term is the same. The limitations of this algorithm is that it is difficult to keep track of the terms after some time. {\small{\sc Zhu}} et al. \cite{Zhu2005402} suggested another useful method, but it also involves tedious calculations of $n$-th derivative to obtain $A_n$. Adomian polynomials can also be obtained recursively (see {\small{\sc Biazar}} and {\small{\sc Shafiof}} \cite{Biazar2007975}, {\small{\sc Duan}} \cite{Duan20101235}, \cite{Duan20102456}, \cite{Duan20116337}). However, the disadvantage is that we do not have explicit form for $A_n$'s.

In this paper, we develop a simple parametrization technique for calculating Adomian polynomials and discuss some of their important properties. Indeed, we develop two new simple methods to generate Adomian polynomials using the orthogonality of functions $\{e^{inx}, n\in\mathbb{Z}\}$. The first method determines these polynomials explicitly, whereas the second method generates them recursively. The newly developed techniques are more viable, require less computation and generate Adomian polynomials in a fewer steps. Both the methods are extended to the case of several variables. Different forms of nonlinearity are discussed as applications of our methods.

\begin{center}
{\bf 2. ADOMIAN POLYNOMIALS AND PARAMETRIZATION METHODS}
\end{center}

We assume the following hypotheses (see {\small{\sc Cherruault}} and {\small{\sc Adomian}} \cite{Cherruault1993103}):
\begin{itemize}
\item[$H1:$] The series solution $u=\sum_{k=0}^{\infty}u_k$ of (\ref{1.1}) is absolutely convergent,
\item[$H2:$] The nonlinear function $N(u)$ admits the representation
\end{itemize}
\begin{equation}\label{2.1}
N(u)=\sum_{k=0}^{\infty}N^{(k)}(0)\frac{u^k}{k!}\ ,\ \ \ \ |u|<\infty.
\end{equation}

The assumption $H2$, is almost always satisfied in concrete physical problems. By $H1$ and $H2$, we have as a generalization of Taylor series, the Adomian series (see {\small{\sc Cherruault}} and {\small{\sc Adomian}} \cite{Cherruault1993103})
\begin{equation}\label{2.2}
N(u)=\sum_{k=0}^{\infty}A_k(u_0,u_1,\ldots,u_k)=\sum_{k=0}^{\infty}N^{(k)}(u_0)\frac{(u-u_0)^k}{k!}.
\end{equation}

Note that (\ref{2.2}) is a rearrangement of an absolutely convergent series (\ref{2.1}). We look at a more general form of parametrization than the one given in (\ref{1.4}). We consider the following parametrization of $u(x,t)$ and its complex conjugate $\overline{u}(x,t)$:
\begin{equation}\label{2.3}
u_\lambda(x,t)=‎‎\sum_{k=0}^{\infty‎}u_k(x,t)f^k(\lambda)\ \ \mathrm{and}\ \ \overline{u}_\lambda(x,t)=‎‎\sum_{k=0}^{\infty‎}\overline{u}_k(x,t)f^k(\lambda),
\end{equation}
\noindent where $\lambda$ is a real parameter and $f$ is any real or complex valued function with $|f|<1$.

Note that for such a parametrization, series (\ref{2.3}) is also absolutely convergent. Now using (\ref{2.2}) and (\ref{2.3}), we have
\begin{equation}\label{2.5}
N(u_\lambda)=\sum_{k=0}^{\infty}\frac{N^{(k)}(u_0)}{k!}\left(\sum_{j=1}^{\infty‎}u_j(x,t)f^j(\lambda)\right)^k.
\end{equation}

Since $\sum_{j=1}^{\infty‎}u_j(x,t)f^j(\lambda)$ is absolutely convergent, by a rearrangement of terms in the right hand side of (\ref{2.5}), we can write $N(u_\lambda)$ as $\sum_{k=0}^{\infty}A_kf^k(\lambda)$, where $A_k$'s are Adomian polynomials. Hence,
\begin{align}\label{2.6}
N(u_\lambda)&=N(u_0)+N^{(1)}(u_0)\left(u_1f(\lambda)+u_2f^2(\lambda)+\ldots\right)\nonumber\\
&\ \ \ +\frac{N^{(2)}(u_0)}{2!}\left(u_1f(\lambda)+u_2f^2(\lambda)+\ldots\right)^2\nonumber\\
&\ \ \ +\frac{N^{(3)}(u_0)}{3!}\left(u_1f(\lambda)+u_2f^2(\lambda)+\ldots\right)^3+\ldots\nonumber\\ 
&=N(u_0)+N^{(1)}(u_0)u_1f(\lambda)+\left(N^{(1)}(u_0)u_2+N^{(2)}(u_0)\frac{u_1^2}{2!}\right)f^2(\lambda)\nonumber\\
&\ \ \ +\left(N^{(1)}(u_0)u_3+N^{(2)}(u_0)u_1u_2+N^{(3)}(u_0)\frac{u_1^3}{3!}\right)f^3(\lambda)+\ldots \nonumber\\
&=\sum_{k=0}^{\infty}A_k(u_0,u_1,\ldots,u_k)f^k(\lambda).
\end{align}

Note that $A_k$'s are polynomials in $u_0,u_1,\ldots,u_k$ only. For a suitable choice of $f$, we possibly can develop a convenient method to determine these Adomian polynomials. One such method was given by Adomian himself where he chooses $f(\lambda)=\lambda$ and then taking $n$-th derivative on both sides of (\ref{2.6}) obtained (\ref{1.6}). In Section 4, we choose $f(\lambda)=e^{i\lambda}$ and develop two new methods to determine Adomian polynomials.

\begin{center}
{\bf 3. SOME PROPERTIES OF ADOMIAN POLYNOMIALS}
\end{center}

In this section, we discuss some important properties of Adomian polynomials, which are very useful in many cases to obtain them without explicit calculations. Indeed, a formal power series can be effectively used to obtain them.

Let $f$ and $g$ be formal power series in $x$ with $f(x)=\sum_{k=0}^{\infty}a_kx^k$ and $g(x)=\sum_{k=0}^{\infty}b_kx^k$. Then,
\begin{equation}\label{3.1}
\frac{g(x)}{f(x)}=\sum_{k=0}^{\infty}c_kx^k,\ \ c_0=\frac{b_0}{a_0},\ c_k=\frac{1}{a_0}\left(b_k-\sum_{j=1}^{k}a_jc_{k-j}\right),
\end{equation}
and for any $n\in\mathbb{N}$, we have
\begin{equation}\label{3.2}
f^n(x)=\sum_{k=0}^{\infty}d_kx^k,\ \ d_0=a_0^n,\ d_k=\frac{1}{ka_0}\sum_{j=1}^{k}(jn-k+j)a_jd_{k-j},
\end{equation}
provided $a_0$ is invertible.

Now we state and prove some general properties of Adomian polynomials.
\begin{theorem}\label{t1}
Let $A_{1_n},A_{2_n},\ldots,A_{m_n}$ be the Adomian polynomials of the nonlinear operators $N_1,N_2,\ldots,N_m$, respectively. Then the Adomian polynomials of\\
\noindent (i) $N(u)=\sum_{k=1}^{m}\alpha_k N_k(u)$ are given by $$A_n=\sum_{k=1}^{m}\alpha_k A_{k_n},\ \forall\ n\in\mathbb{N}_0$$ where the $\alpha_k$'s are scalars.\\
\noindent (ii) $N(u)=\prod_{k=1}^{m}N_k(u)$ are given by
\begin{equation}\label{3.3}
A_n=\underset{k_j\in\mathbb{N}_0}{\sum_{\sum_{j=1}^{m}k_j=n}}\prod_{j=1}^{m}A_{j_{k_j}},\ \forall\ n\in\mathbb{N}_0.
\end{equation}
In particular, Adomian polynomials of $N(u)=N_1(u)N_2(u)$, from (\ref{3.3}), are
\begin{equation*}
A_n=\sum_{k=0}^{n}A_{1_k}A_{2_{n-k}}.
\end{equation*} 
\noindent (iii) $N(u)=N_1(u)/N_2(u)$ are given by $A_0=A_{1_0}/A_{2_0}$ and
\begin{equation*}\label{3.4}
A_n=\frac{1}{A_{2_0}}\left(A_{1_n}-\sum_{k=1}^{n}A_{2_k}A_{n-k}\right),\ \forall\ n\in\mathbb{N}.
\end{equation*}
\noindent (iv) $N(u)=N^p_1(u)$ for any $p\in\mathbb{N}$ are given by $A_0=A^p_{1_0}$ and
\begin{equation*}\label{3.5}
A_n=\frac{1}{nA_{1_0}}\sum_{k=1}^{n}(kp-n+k)A_{1_k}A_{n-k},\ \forall\ n\in\mathbb{N}.
\end{equation*}
\noindent (v) $N(u)=N_1\left(N_2(u)\right)$ are given by $A_0=N_1\left(A_{2_0}\right)$ and
\begin{equation}\label{3.6}
A_n=\underset{k_j\in\mathbb{N}_0}{\sum_{\sum_{j=1}^{n}jk_j=n}}N_1^{(\sum_{j=1}^{n}k_j)}\left(A_{2_0}\right)\prod_{j=1}^{n}\frac{A_{2_j}^{k_j}}{k_j!},\ \forall\ n\in\mathbb{N}.
\end{equation}
\end{theorem}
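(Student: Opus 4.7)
The plan is to apply the parametrization (2.3) together with the defining identity $N_j(u_\lambda) = \sum_{n=0}^{\infty} A_{j_n}\, f^n(\lambda)$ established in (2.6) for each constituent operator $N_j$. For each of the five constructions, I would substitute $u_\lambda$ into $N$, perform the relevant algebraic operation on the resulting formal power series in $f(\lambda)$, and read off the coefficient of $f^n(\lambda)$; this coefficient is, by construction, $A_n$.

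Parts (i) and (ii) are then immediate: linearity distributes term by term in $f^n(\lambda)$, while the product of $m$ power series has $n$-th coefficient given by the Cauchy-type convolution $\sum_{\sum_j k_j=n}\prod_{j=1}^m A_{j_{k_j}}$, matching (3.3). Parts (iii) and (iv) follow directly from the formal power series identities (3.1) and (3.2) applied to $N_1(u_\lambda)/N_2(u_\lambda)$ and $N_1(u_\lambda)^p$, after identifying $a_k=A_{2_k}$, $b_k=A_{1_k}$ in (iii) and $a_k=A_{1_k}$ in (iv); the resulting recurrences coincide with (3.4) and (3.5) verbatim.

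The main technical step is (v). Here I apply the Adomian series expansion (2.2) of $N_1$ centered at $A_{2_0}$, using $N_2(u_\lambda) - A_{2_0} = \sum_{k\geq 1} A_{2_k}\, f^k(\lambda)$, to write
\begin{equation*}
N_1\bigl(N_2(u_\lambda)\bigr) = \sum_{j=0}^{\infty} \frac{N_1^{(j)}(A_{2_0})}{j!}\biggl(\sum_{k=1}^{\infty} A_{2_k}\, f^k(\lambda)\biggr)^{\!j}.
\end{equation*}
Expanding the inner $j$-th power by the multinomial theorem over non-negative tuples $(k_1,k_2,\ldots)$ with $\sum_i k_i = j$ produces a sum of terms $\frac{j!}{\prod_i k_i!}\prod_i A_{2_i}^{k_i}\, f^{\sum_i i k_i}(\lambda)$. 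Collecting the coefficient of $f^n(\lambda)$, the factor $j!$ cancels $1/j!$, the constraint $\sum_i k_i = j$ merely records $j=\sum_i k_i$ (which is then the order of differentiation of $N_1$), and only $\sum_i i k_i = n$ survives. Since $k_i=0$ for $i>n$, the surviving sum is over $(k_1,\ldots,k_n)$ with $\sum_{j=1}^n j k_j = n$, giving precisely (3.6).

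The only real obstacle is justifying the interchange of summations in (v). This is legitimate because for each fixed $n$ only finitely many tuples $(k_1,\ldots,k_n)$ with $\sum_{j=1}^n j k_j=n$ contribute, so the coefficient of $f^n(\lambda)$ is a finite algebraic expression in the $A_{2_i}$'s (hence a polynomial in $u_0,\ldots,u_n$), and the rearrangement holds as a formal power series identity in $f(\lambda)$.
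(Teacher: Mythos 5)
Your proof is correct, and for parts (iii) and (iv) it coincides with the paper's argument: both invoke the formal power series identities (3.1) and (3.2) applied to the Adomian series of $N_1(u_\lambda)$ and $N_2(u_\lambda)$. Where you genuinely diverge is in (ii) and (v). The paper works with the derivative representation (1.6) at $f(\lambda)=\lambda$ and quotes the generalized Leibniz rule (3.7) for (ii) and Fa\`a di Bruno's formula (3.8) for (v), evaluating the $n$-th $\lambda$-derivative at $\lambda=0$. You instead stay entirely at the level of coefficient extraction from the series $N_j(u_\lambda)=\sum_k A_{j_k}f^k(\lambda)$ of (2.6): the Cauchy convolution of $m$ power series gives (3.3), and for (v) you re-expand $N_1$ around $A_{2_0}=N_2(u_0)$, apply the multinomial theorem to $\bigl(\sum_{k\geq1}A_{2_k}f^k(\lambda)\bigr)^j$, and collect the coefficient of $f^n(\lambda)$, which is in effect a self-contained derivation of the Fa\`a di Bruno combinatorics rather than a citation of it. Your closing remark that only finitely many tuples with $\sum_{j=1}^n jk_j=n$ contribute is exactly the right justification for the rearrangement, and it mirrors the paper's own observation that (2.6) is a rearrangement of an absolutely convergent series. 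The trade-off: the paper's route is shorter because it leans on known differentiation formulas, while yours is more elementary and makes the identification of $A_n$ as the $f^n(\lambda)$-coefficient (hence its independence of the particular parametrization $f$) more transparent; like the paper, you should state explicitly in (iii) and (iv) that $A_{2_0}=N_2(u_0)$, respectively $A_{1_0}=N_1(u_0)$, must be invertible for the recurrences (3.4) and (3.5) to make sense.
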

\begin{proof}
\noindent (i) Directly follows from (\ref{1.6}).\\
\noindent (ii) Note that Leibniz rule (see {\small{\sc Johnson}} \cite{Johnson02thecurious}) for higher derivatives of product of $m$ functions is given by
\begin{equation}\label{3.7}
\frac{d^n}{dt^n}\left(f_1(t)f_2(t)\ldots f_m(t)\right)=\underset{k_j\in\mathbb{N}_0}{\sum_{\sum_{j=1}^{m}k_j=n}}n!\prod_{j=1}^{m}\frac{f_j^{(k_j)}(t)}{k_j!}.
\end{equation}
Using (\ref{1.6}) and (\ref{3.7}), the Adomian polynomials are
\begin{align*}
A_n(u_0,u_1,\ldots,u_n)&=\left.\frac{1}{n!}\frac{\partial^n N_1(u_\lambda)N_2(u_\lambda)\ldots N_m(u_\lambda)}{\partial \lambda^n} \right|_{\lambda=0}\\
&=\frac{1}{n!}\underset{k_j\in\mathbb{N}_0}{\sum_{\sum_{j=1}^{m}k_j=n}}n!\prod_{j=1}^{m}\left.\frac{1}{k_j!}\frac{\partial^{k_j}N_j(u_\lambda)}{\partial\lambda^{k_j}}\right|_{\lambda=0}\\
&=\underset{k_j\in\mathbb{N}_0}{\sum_{\sum_{j=1}^{m}k_j=n}}\prod_{j=1}^{m}A_{j_{k_j}},\ \forall\ n\in\mathbb{N}_0.
\end{align*}
\noindent (iii) Follows directly from (\ref{1.5}) and (\ref{3.1}), whereas (iv) follows from (\ref{1.5}) and (\ref{3.2}).\\
\noindent (v) By using Fa\`{a} di Bruno's formula (see {\small{\sc Johnson}} \cite{Johnson02thecurious}) for generalized chain rule for higher derivatives of composition of two functions,
\begin{equation*}\label{3.8}
\frac{d^n}{dt^n}g\big(f(t)\big)=\underset{k_j\in\mathbb{N}_0}{\sum_{\sum_{j=1}^{n}jk_j=n}}n!g^{(\sum_{j=1}^{n}k_j)}\big(f(t)\big)\prod_{j=1}^{n}\frac{1}{k_j!}\left(\frac{f^{(j)}(t)}{j!}\right)^{k_j},\ \forall\ n\in\mathbb{N},
\end{equation*}
we get from (\ref{1.6}),
\begin{align*}
A_n(u_0,u_1,\ldots,u_n)&=\left.\frac{1}{n!}\frac{\partial^n N_1\left(N_2(u_\lambda)\right)}{\partial \lambda^n} \right|_{\lambda=0}\\
&=\underset{k_j\in\mathbb{N}_0}{\sum_{\sum_{j=1}^{n}jk_j=n}}N_1^{(\sum_{j=1}^{n}k_j)}\left(N_2(u_\lambda)\right)\prod_{j=1}^{n}\left.\frac{1}{k_j!}\left(\frac{1}{j!}\frac{\partial^jN_2(u_\lambda)}{\partial \lambda^j}\right)^{k_j}\right|_{\lambda=0}\\
&=\underset{k_j\in\mathbb{N}_0}{\sum_{\sum_{j=1}^{n}jk_j=n}}N_1^{(\sum_{j=1}^{n}k_j)}\left(A_{2_0}\right)\prod_{j=1}^{n}\frac{A_{2_j}^{k_j}}{k_j!},\ \forall\ n\in\mathbb{N}.
\end{align*}
This completes the proof.
\end{proof}
\noindent {\small{\sc Remark 1.}} {\small{\sc Adomian}} and {\small{\sc Rach}} \cite{Adomian1986504} proposed an algorithm for obtaining Adomian polynomials of composite nonlinearity, whereas (\ref{3.6}) gives an explicit formula. Also, Rach formula (\ref{1.7}) is a particular case of (\ref{3.6}) for composed function $N(u_\lambda)$.

\begin{center}
{\bf 4. TWO SIMPLE METHODS TO CALCULATE ADOMIAN POLYNOMIALS}
\end{center}

In this section, we give two new methods to calculate Adomian polynomials. The basic idea is to avoid the tedious calculations of higher derivatives involved in the existing methods. Let $\mathbb{Z}$ denote the set of all integers. Consider the set of orthogonal functions $\{e^{inx}, n\in\mathbb{Z}\},$ which indeed forms a basis for the Hilbert space $L^2[-\pi,\pi]$ with inner product
\begin{equation*}\label{4.1}
<f ,g >=\int^\pi_{-\pi} f(x)\overline{g(x)} \,dx.
\end{equation*}
Specifically, we use the fact
\begin{equation}\label{4.2}
 <e^{in\lambda},e^{im\lambda} >=\int^\pi_{-\pi} e^{in\lambda} e^{-im\lambda} \,d\lambda =\left\{
	\begin{array}{ll}
	    0, & \mbox{if } m\neq n,\\
		2\pi,  & \mbox{if } m=n.
	\end{array}
\right.
\end{equation}
We choose $f(\lambda)=e^{i\lambda}$ in (\ref{2.3}), so that
\begin{equation}\label{4.3}
u_\lambda(x,t)=‎‎\sum_{k=0}^{\infty‎}u_k(x,t) e^{ik\lambda}
\end{equation}
and its complex conjugate, $\overline{u}(x,t)$ is parametrized as $$\overline{u}_\lambda(x,t)=‎‎\sum_{k=0}^{\infty‎}\overline{u}_k(x,t) e^{ik\lambda}.$$
\noindent {\small{\sc Remark 2.}}
Note that $u_\lambda$ in (\ref{4.3}), as a function of $\lambda$, is a series of periodic functions each of period $2\pi$ and therefore $N(u_\lambda)$ is also $2\pi$-periodic. The absolute convergence of $u_\lambda=‎‎\sum_{k=0}^{\infty‎}u_k e^{ik\lambda}$ and $N(u_\lambda)$ follow from hypotheses $H1$ and $H2$. Also, for parametrization (\ref{4.3}), Adomian polynomials for the nonlinear function $N(u)$ turn out to be the Fourier coefficients of the periodic function $N(u_\lambda)$.
\begin{theorem}\label{t2}
Let $u_\lambda=\sum_{k=0}^{\infty‎}u_k e^{ik\lambda}$ be a parametrized representation of $u(x,t)$, where $\lambda$ is a real parameter and $N$ be the nonlinear function defined in (\ref{1.1}). Then,
\begin{equation*}\label{4.4}
\int^\pi_{-\pi} N\left(u_\lambda\right) e^{-in\lambda} \,d\lambda  = \int^\pi_{-\pi} N\left(‎‎\sum_{k=0}^{n‎}u_k e^{ik\lambda}\right) e^{-in\lambda} \,d\lambda,\ \forall\ n\in\mathbb{N}_0.
\end{equation*}
\end{theorem}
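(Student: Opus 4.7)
The plan is to show that both sides of (4.4) equal $2\pi A_n(u_0,u_1,\ldots,u_n)$, using the Adomian series expansion of $N(u_\lambda)$ established in (2.6) together with the orthogonality relation (4.2). The crucial observation is that the $n$-th Adomian polynomial $A_n$ depends only on $u_0, u_1, \ldots, u_n$, so it cannot see any modification made to $u_k$ for $k>n$; in particular, truncating the tail leaves the $n$-th Fourier coefficient unchanged.

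First I would evaluate the left-hand side. By (2.6) with the choice $f(\lambda)=e^{i\lambda}$, we have
\[
N(u_\lambda) = \sum_{k=0}^{\infty} A_k(u_0, u_1, \ldots, u_k)\, e^{ik\lambda}.
\]
Under hypotheses $H1$ and $H2$, this series converges absolutely and uniformly in $\lambda\in[-\pi,\pi]$, so I can interchange sum and integral. Multiplying by $e^{-in\lambda}$ and applying (4.2), only the term $k=n$ survives, yielding
\[
\int_{-\pi}^{\pi} N(u_\lambda)\, e^{-in\lambda}\, d\lambda = 2\pi\, A_n(u_0, u_1, \ldots, u_n).
\]

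For the right-hand side, I set $v_\lambda := \sum_{k=0}^{n} u_k e^{ik\lambda}$, which is just $u_\lambda$ with the sequence $(u_k)$ replaced by the truncated sequence $\tilde u_k := u_k$ for $k \le n$ and $\tilde u_k := 0$ for $k > n$. Since $v_\lambda$ is a bounded trigonometric polynomial and $N$ is entire by $H2$, the same derivation that produced (2.6) applies verbatim and gives
\[
N(v_\lambda) = \sum_{k=0}^{\infty} A_k(\tilde u_0, \tilde u_1, \ldots, \tilde u_k)\, e^{ik\lambda},
\]
again with absolute and uniform convergence. Applying orthogonality, the $n$-th Fourier coefficient extracts $2\pi A_n(\tilde u_0, \ldots, \tilde u_n)$. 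But inspection of (2.6) shows that $A_n$ is a polynomial in $u_0,\dots,u_n$ only, and $\tilde u_k = u_k$ for $0 \le k \le n$, so $A_n(\tilde u_0, \ldots, \tilde u_n) = A_n(u_0, \ldots, u_n)$, which matches the value obtained for the left-hand side.

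The main technical point is justifying the termwise exchange of summation and integration in the two Fourier expansions. For the left-hand side this is handled by $H1$ (absolute convergence of $\sum u_k$) combined with $H2$ (entire nonlinearity), as these together imply that the double series in (2.5)--(2.6) rearranges absolutely to $\sum_k A_k e^{ik\lambda}$ with uniform bound in $\lambda$. For the right-hand side the argument is in fact easier, since $v_\lambda$ ranges over a compact subset of $\mathbb{C}$, so $N(v_\lambda)$ is represented by a uniformly convergent Taylor series in $v_\lambda-u_0$. Once these convergence issues are handled, the identification of the $n$-th Fourier coefficient in each expansion is immediate and (4.4) follows.
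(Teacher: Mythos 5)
Your proof is correct, but it follows a genuinely different route from the paper's. You identify both integrals as $2\pi A_n(u_0,\ldots,u_n)$: the left side via the rearranged expansion $N(u_\lambda)=\sum_k A_k e^{ik\lambda}$ from (2.6) with $f(\lambda)=e^{i\lambda}$, the right side by running the same expansion for the truncated sequence and invoking the locality property that $A_n$ depends only on $u_0,\ldots,u_n$; orthogonality (4.2) then does the rest. The paper instead never mentions the $A_k$ in this proof: it expands $N(u_\lambda)$ by the Taylor series (2.2) about $u_0$, justifies termwise integration by the Weierstrass M-test (using $H1$ and the infinite radius of convergence in $H2$), and inside each power $\bigl(\sum_{j=1}^{n}u_j e^{ij\lambda}+\sum_{j>n}u_j e^{ij\lambda}\bigr)^k$ uses orthogonality to discard every term containing a factor of frequency exceeding $n$, then reassembles the truncated Taylor series into $N\bigl(\sum_{k=0}^{n}u_k e^{ik\lambda}\bigr)$. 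The trade-off: the paper's argument is self-contained at the level of the Taylor expansion and does not rely on the rearrangement into Adomian polynomials; yours is conceptually cleaner (both sides are the same Fourier coefficient, and that coefficient cannot see $u_k$ for $k>n$), but it leans on (2.6) together with the uniform convergence of the rearranged series $\sum_k A_k e^{ik\lambda}$, which the paper justifies only informally, and it effectively re-derives the identity (4.7) that the paper states only afterwards in the First Method --- this is not circular, since (2.6) precedes Theorem 4.1, but you should make the uniform bound $\sum_k |A_k|\le\sum_m \frac{|N^{(m)}(u_0)|}{m!}M^m<\infty$ explicit to put your interchange of sum and integral on the same footing as the paper's M-test step.
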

\begin{proof}
From the assumption $H1$, $\sum_{j=1}^{\infty‎}|u_j|=M<\infty$. Therefore, from (\ref{2.2}), the $k$-th term in $N(u_{\lambda})$ is
\begin{eqnarray*}\label{4.5}
\left|\frac{N^{(k)}(u_0)}{k!}\left(\sum_{j=1}^{\infty‎}u_j e^{ij\lambda}\right)^k\right| &\leq&
\left|\frac{N^{(k)}(u_0)}{k!}\right|\left(\sum_{j=1}^{\infty‎}|u_j|\right)^k=\left|\frac{N^{(k)}(u_0)}{k!}\right|M^k.
\end{eqnarray*}
Since (\ref{2.2}) is an absolutely convergent series with infinite radius of convergence, $\sum_{k=0}^{\infty}\left|\frac{N^{(k)}(u_0)}{k!}\right|M^k$ converges. By Weierstrass M-test, the series
\begin{equation*}
\sum_{k=0}^{\infty}\frac{N^{(k)}(u_0)}{k!}\left(\sum_{j=1}^{\infty‎}u_j e^{ij\lambda}\right)^k
\end{equation*}
 converges uniformly. Hence, using (\ref{2.2}), we get for $n\in\mathbb{N}_0$
\begin{align*}\label{new}
\int^\pi_{-\pi} N(u_\lambda) e^{-in\lambda} \,d\lambda
&=\int^\pi_{-\pi}\sum_{k=0}^{\infty} \frac{N^{(k)}(u_0)}{k!}\left(‎‎\sum_{j=1}^{n}u_j e^{ij\lambda}+‎‎\sum_{j=n+1}^{\infty‎}u_j e^{ij\lambda}\right)^ke^{-in\lambda} \,d\lambda\nonumber\\
%&=\int^\pi_{-\pi}\underset{m\rightarrow \infty}{\lim}\sum_{k=0}^{m} \frac{N^{(k)}(u_0)}{k!}\left(‎‎\sum_{j=1}^{n}u_j e^{ij\lambda}+‎‎\sum_{j=n+1}^{\infty‎}u_j e^{ij\lambda}\right)^ke^{-in\lambda} \,d\lambda\nonumber\\
%&=\underset{m\rightarrow \infty}{\lim}\int^\pi_{-\pi}\sum_{k=0}^{m}\frac{N^{(k)}(u_0)}{k!}\left(‎‎\sum_{j=1}^{n}u_j e^{ij\lambda}+‎‎\sum_{j=n+1}^{\infty‎}u_j e^{ij\lambda}\right)^ke^{-in\lambda} \,d\lambda\\
&=\underset{m\rightarrow \infty}{\lim}\sum_{k=0}^{m}\int^\pi_{-\pi} \frac{N^{(k)}(u_0)}{k!}\left(‎‎\sum_{j=1}^{n}u_j e^{ij\lambda}\right)^ke^{-in\lambda} \,d\lambda,\ \ \mathrm{(by\ (\ref{4.2}))}\\
%&=\underset{m\rightarrow \infty}{\lim}\int^\pi_{-\pi} \sum_{k=0}^{m}\frac{N^{(k)}(u_0)}{k!}\left(‎‎\sum_{j=1}^{n}u_j e^{ij\lambda}\right)^ke^{-in\lambda} \,d\lambda\\
&=\int^\pi_{-\pi}\sum_{k=0}^{\infty}\frac{N^{(k)}(u_0)}{k!}\left(‎‎\sum_{j=0}^{n}u_j e^{ij\lambda}-u_0\right)^ke^{-in\lambda} \,d\lambda\\
&=\int^\pi_{-\pi} N\left(‎‎\sum_{k=0}^{n‎}u_k e^{ik\lambda}\right) e^{-in\lambda} \,d\lambda,
\end{align*}
where the last step follows from (\ref{2.2}). This completes the proof.
\end{proof}

Using Theorem \ref{t2}, we propose two methods to calculate Adomian polynomials.

\subsection*{4.1. First Method}

\hspace{.75cm} Let $u_\lambda=‎‎\sum_{k=0}^{\infty‎}u_k e^{ik\lambda}$ and $N(u_\lambda)=‎‎\sum_{k=0}^{\infty‎}A_k e^{ik\lambda},$ where  $A_k$'s are Adomian polynomials. Then
\begin{equation}\label{4.6}
\int^\pi_{-\pi} N\left(‎‎\sum_{k=0}^{\infty‎}u_k e^{ik\lambda}\right) e^{-in\lambda} \,d\lambda =\int^\pi_{-\pi} ‎‎\sum_{k=0}^{\infty‎}A_k e^{ik\lambda} e^{-in\lambda}\,d\lambda=2\pi A_n.
\end{equation}
The last equality in (\ref{4.6}) follows due to the uniform convergence of the series $\sum_{k=0}^{\infty‎}A_k e^{i(k-n)\lambda}$. Hence,
\begin{align}
A_n(u_0,u_1,\ldots,u_n)&=\frac{1}{2\pi}\int^\pi_{-\pi} N\left(‎‎\sum_{k=0}^{\infty‎}u_k e^{ik\lambda}\right) e^{-in\lambda} \,d\lambda\nonumber\\
&=\frac{1}{2\pi}\int^\pi_{-\pi} N\left(‎‎\sum_{k=0}^{n‎}u_k e^{ik\lambda}\right) e^{-in\lambda} \,d\lambda,\ \forall\ n\in\mathbb{N}_0,\label{4.8}
\end{align}
by Theorem \ref{t2}.

\subsection*{4.2. Second Method}

\hspace{.75cm} {\small{\sc Biazar}} and {\small{\sc Shafiof}} \cite{Biazar2007975} proposed a recursive method to calculate Adomian polynomials, in which only one time differentiation is required. Some useful recursive relationships among the index vectors of the Adomian polynomials were obtained by {\small{\sc Duan}} \cite{Duan20101235}. These relationships within index vectors easily generates Adomian polynomials on using Rach formula (\ref{1.7}).

Here, we propose a new recursive method for calculating Adomian polynomials by using a different approach. Define an operator $T$ by
\begin{equation}\label{4.9}
T(A_{n}(u_0,u_1,\ldots,u_{n}))=\frac{1}{2\pi}\int^\pi_{-\pi}A_{n}(v_0,v_1,\ldots,v_{n})e^{-i\lambda} \,d\lambda,
\end{equation}
where\ $v_k=u_k+(k+1)u_{k+1}e^{i\lambda}$ and $\overline{v}_k=\overline{u}_k+(k+1)\overline{u}_{k+1}e^{i\lambda}$, $k\in\{0,1,2,\ldots,n\}$.
\begin{proposition}\label{l4.1}
Let $u=\sum_{k=0}^{\infty}u_k$ be the solution of (\ref{1.1}) and $N$ be a nonlinear operator. Then, operator $T$ given by (\ref{4.9}) satisfies the following properties.\\
\noindent (i) $T(u_k)=(k+1)u_{k+1},$\\
\noindent (ii) $T\left(N^{(k)}(u_0)\right)=u_1N^{(k+1)}(u_0),$\\
\noindent (iii) $T(u_{k_1}u_{k_2}\ldots u_{k_m})=u_{k_1}T(u_{k_2}u_{k_3}\ldots u_{k_m})+u_{k_2}u_{k_3}\ldots u_{k_m}T(u_{k_1}),$\\
\noindent (iv) $T\left(u_{k_1}u_{k_2}\ldots u_{k_m}N^{(k)}(u_0)\right)=u_{k_1}u_{k_2}\ldots u_{k_m}T\left(N^{(k)}(u_0)\right)\\~~~~~~~~~~~~~~~~~~~~~~~~~~~~~~~~~~~~~~~~~~+T(u_{k_1}u_{k_2}\ldots u_{k_m})N^{(k)}(u_0),$\\
\noindent (v) $T\left(\alpha u_{k_1}u_{k_2}\ldots u_{k_m}N^{(k)}(u_0)+\beta u_{j_1}u_{j_2}\ldots u_{j_l}N^{(k')}(u_0)\right)\\~~~~~~~~~~~~~~~~~~~~~~~~~~~~~~=\alpha T\left(u_{k_1}u_{k_2}\ldots u_{k_m}N^{(k)}(u_0)\right)+\beta T\left(u_{j_1}u_{j_2}\ldots u_{j_l}N^{(k')}(u_0)\right),$\\ where $k,k',k_i,j_i\in\mathbb{N}_0$; $m,l\in\mathbb{N}_2$ and $\alpha,\beta$ are scalars.
\end{proposition}
\begin{proof}
\noindent Parts (i), (iii) and (v) follow easily by using (\ref{4.2}).\\
\noindent (ii) From (\ref{4.9}), we have 
\begin{equation}\label{4.10}
T\left(N^{(k)}(u_0)\right)=\frac{1}{2\pi}\int^\pi_{-\pi}N^{(k)}\left(u_0+u_1e^{i\lambda}\right)e^{-i\lambda} \,d\lambda,\ \forall\ k\in\mathbb{N}_0.
\end{equation}
From (\ref{4.8}), the left hand side of (\ref{4.10}) is $A_1$ for $N^{(k)}(u)$, which by (\ref{1.7}) is equal to $u_1N^{(k+1)}(u_0)$. This can also be obtained directly on using (\ref{2.2}) and (\ref{4.2}).\\
\noindent (iv) Using (\ref{2.2}) and (\ref{4.2}), we get
\begin{equation}\label{4.11}
\frac{1}{2\pi}\int^\pi_{-\pi}N^{(k)}\left(u_0+u_1e^{i\lambda}\right)\,d\lambda=N^{(k)}\left(u_0\right).
\end{equation}
From (\ref{4.9}), we have
\begin{align*}
T\bigg(u_{k_1}u_{k_2}&\ldots u_{k_m}N^{(k)}(u_0)\bigg)\\&=\frac{1}{2\pi}\int^\pi_{-\pi}\prod_{j=1}^{m}\left(u_{k_j}+(k_j+1)u_{k_j+1}e^{i\lambda}\right)N^{(k)}\left(u_0+u_1e^{i\lambda}\right)e^{-i\lambda} \,d\lambda\\
&=\prod_{j=1}^{m}u_{k_j}\frac{1}{2\pi}\int^\pi_{-\pi}N^{(k)}\left(u_0+u_1e^{i\lambda}\right)e^{-i\lambda} \,d\lambda\\
&\ \ \ +\sum_{l=1}^{m}(k_l+1)u_{k_l+1}\underset{j\neq l}{\prod_{j=1}^{m}}u_{k_j}\frac{1}{2\pi}\int^\pi_{-\pi}N^{(k)}\left(u_0+u_1e^{i\lambda}\right) \,d\lambda,\ \ \mathrm{(by\ (\ref{4.2}))}\\
&=u_{k_1}u_{k_2}\ldots u_{k_m}T\left(N^{(k)}(u_0)\right)+T(u_{k_1}u_{k_2}\ldots u_{k_m})N^{(k)}(u_0),\ \forall\ m\geq2.
\end{align*}
The last equality follows from (\ref{4.10}) and (\ref{4.11}).
\end{proof}
For an operator $T$ satisfying the above properties, the following result due to {\small{\sc Babolian}} and {\small{\sc Javadi}} \cite{Babolian2004253} holds:
\begin{equation}\label{4.12}
A_n(u_0,u_1,\ldots,u_n)=\frac{1}{n}T\left(A_{n-1}(u_0,u_1,\ldots,u_{n-1})\right).
\end{equation}
After calculating $A_0$ from (\ref{4.8}) as
\begin{equation}\label{4.13}
A_0(u_0)=N(u_0),
\end{equation}  
$A_n$ can be calculated by following recursive formula, obtained using (\ref{4.9}) and (\ref{4.12}), 
\begin{equation}\label{4.14}
A_n(u_0,u_1,\ldots,u_n)=\frac{1}{2n\pi}\int^\pi_{-\pi} A_{n-1}(v_0,v_1,\ldots,v_{n-1})e^{-i\lambda} \,d\lambda,\ \forall\ n\in \mathbb{N}, 
\end{equation}
where\ $v_k=u_k+(k+1)u_{k+1}e^{i\lambda}$ and $\overline{v}_k=\overline{u}_k+(k+1)\overline{u}_{k+1}e^{i\lambda}$, $k\in\{0,1,2,\ldots,n\}$.

\begin{center}
{\bf 5. APPLICATIONS TO SOME FORMS OF NONLINEARITY}
\end{center}

In this section, we apply the above discussed methods to calculate Adomian polynomials for different forms of nonlinearity. The second method is efficient in cases where Taylor series expansion is required, as in case of exponential, logarithmic and trigonometric nonlinearity. The advantage is that the second algorithm requires at most the first two terms of the Taylor series expansion. Applications of properties of Adomian polynomials discussed in the third section are also illustrated.
\vspace{.15cm}\\
\noindent {\small{\sc Example 1.}} (Nonlinear polynomials) Adomian polynomials for $N(u)=\overline{u}u^m$, where $m\in\mathbb{N}$. We use (\ref{4.8}) to find $A_n$. Obviously, $A_0=|u_0|^2u_0^{m-1}$ and
\begin{align*} 
A_1&=\frac{1}{2\pi}\int^\pi_{-\pi} (\overline{u}_0 + \overline{u}_1e^{i\lambda})(u_0 + u_1e^{i\lambda})^m e^{-i\lambda} \,d\lambda\\
&=\frac{1}{2\pi}\int^\pi_{-\pi} (\overline{u}_0 + \overline{u}_1e^{i\lambda})\sum_{k=0}^{m}{\binom{m}{k}}u_0^k(u_1e^{i\lambda})^{m-k}e^{-i\lambda}  \,d\lambda\\
&=u_0^m\overline{u}_1+mu_0^{m-2}|{u}_0|^2u_1,\\
A_2&=\frac{1}{2\pi}\int^\pi_{-\pi} (\overline{u}_0 + \overline{u}_1e^{i\lambda}+ \overline{u}_2e^{2i\lambda})(u_0 + u_1e^{i\lambda} + u_2e^{2i\lambda} )^me^{-2i\lambda}  \,d\lambda\\
&=\frac{1}{2\pi}\int^\pi_{-\pi} (\overline{u}_0 + \overline{u}_1e^{i\lambda}+ \overline{u}_2e^{2i\lambda})\\
&\ \ \ \sum_{\sum_{j=1}^3k_j=m}{\binom{m}{k_1,k_2,k_3}}u_0^{k_1}(u_1e^{i\lambda})^{k_2}(u_2e^{2i\lambda})^{k_3}e^{-2i\lambda} \,d\lambda\\
&=u_0^m\overline{u}_2+mu_0^{m-1}|u_1|^2+mu_0^{m-2}|u_0|^2u_2+\frac{1}{2}m(m-1)u_0^{m-3}|u_0|^2u_1^2.
\end{align*}
Indeed, from Theorem \ref{t1} (ii), the $n$-th order Adomian polynomial is given by
\begin{equation*}\label{5.1}
A_n(u_0,u_1,\ldots,u_n)=\underset{k_j\in\mathbb{N}_0}{\sum_{\sum_{j=1}^{m+1}{k_j}=n}}\ \overline{u}_{k_{m+1}}\prod_{j=1}^{m} u_{k_j},\ \forall\ \ n\in \mathbb{N}_0.
\end{equation*}

\noindent {\small{\sc Example 2.}} (Trigonometric function) Adomian polynomials for $N(u)= \sin{u}$. Using (\ref{4.8}), $A_0=\sin{u_0}$ and
\begin{align*}
A_1&=\frac{1}{2\pi}\int^\pi_{-\pi} \left[\sin({u_0+u_1e^{i\lambda}})\right]e^{-i\lambda}\,d\lambda\\
&=\frac{1}{2\pi}\int^\pi_{-\pi} \left[\cos{(u_1e^{i\lambda})}\sin{u_0} +\sin{(u_1e^{i\lambda})}\cos{u_0}\right]e^{-i\lambda} \,d\lambda\\
&=\frac{1}{2\pi}\int^\pi_{-\pi} \left[\left\{1-\frac{u_1^2e^{2i\lambda}}{2!}+\ldots\right\}\sin{u_0}+ \left\{u_1e^{i\lambda}-\frac{u_1^3e^{3i\lambda}}{3!}+\ldots\right\}\cos{u_0}\right]e^{-i\lambda}\,d\lambda\\
&=u_1\cos{u_0},\\
A_2&=\frac{1}{2\pi}\int^\pi_{-\pi} \left[\sin({u_0+u_1e^{i\lambda}+u_2e^{2i\lambda}})\right]e^{-2i\lambda}\,d\lambda\\
&=\frac{1}{2\pi}\int^\pi_{-\pi} \left[\left\{1-\frac{(u_1e^{i\lambda}+u_2e^{2i\lambda})^2}{2!}+\ldots\right\}\sin{u_0}\right.\\
&\ \ \ + \left\{(u_1e^{i\lambda}+u_2e^{2i\lambda})-\ldots\right\}\cos{u_0}\bigg]e^{-2i\lambda}\,d\lambda\\
&=u_2\cos{u_0}-\frac{1}{2}u_1^2\sin{u_0},\\
A_3&=\frac{1}{2\pi}\int^\pi_{-\pi} \left[\sin({u_0+u_1e^{i\lambda}+u_2e^{2i\lambda}+u_3e^{3i\lambda}})\right]e^{-3i\lambda}\,d\lambda\\
&=\frac{1}{2\pi}\int^\pi_{-\pi} \left[\left\{1-\frac{(u_1e^{i\lambda}+u_2e^{2i\lambda}+u_3e^{3i\lambda})^2}{2!}+\ldots\right\}\sin{u_0}\right.+\bigg\{(u_1e^{i\lambda}+u_2e^{2i\lambda}\\
&\ \ \ +u_3e^{3i\lambda})-\left.\left.\frac{(u_1e^{i\lambda}+u_2e^{2i\lambda}+u_3e^{3i\lambda})^3}{3!}+\ldots\right\} \cos{u_0}\right]e^{-3i\lambda}\,d\lambda\\
&=u_3\cos{u_0}-\frac{1}{6}u_1^3\cos{u_0}-u_1u_2\sin{u_0}.
\end{align*} 
Similarly, $A_4, A_5,\ldots$ can be calculated.
\vspace{.15cm}\\
\noindent {\small{\sc Example 3.}} (Exponential function) Adomian polynomials for $N(u)= e^u$. From (\ref{4.8}), we have $A_0= e^{u_0}$ and
\begin{align*} 
A_1&=\frac{1}{2\pi}\int^\pi_{-\pi} e^{u_0 + u_1e^{i\lambda}} e^{-i\lambda}\,d\lambda\\
&=\frac{1}{2\pi}\int^\pi_{-\pi} e^{u_0}\left\{1+ \frac{u_1e^{i\lambda}}{1!} +\ldots\right\} e^{-i\lambda}\,d\lambda\\
&=u_1e^{u_0},\\
A_2&=\frac{1}{2\pi}\int^\pi_{-\pi} e^{u_0 + u_1e^{i\lambda} + u_2e^{2i\lambda}} e^{-2i\lambda}\,d\lambda\\
&=\frac{1}{2\pi}\int^\pi_{-\pi} e^{u_0}\left\{1+ \frac{(u_1e^{i\lambda}+ u_2e^{2i\lambda})}{1!} +\frac{(u_1e^{i\lambda}+ u_2e^{2i\lambda})^2}{2!} + \ldots\right\} e^{-2i\lambda}\,d\lambda\\
&=\left(u_2 + \frac{u_1^2}{2}\right)e^{u_0},
\end{align*}
\noindent and etc. Indeed, from Theorem \ref{t1} (v), the $n$-th order Adomian polynomial for $e^u$ is
\begin{equation*}\label{5.11}
A_n(u_0,u_1,\ldots,u_n)=e^{u_0}\underset{k_j\in\mathbb{N}_0}{\sum_{\sum_{j=1}^{n}jk_j=n}}\prod_{j=1}^{n}\frac{u_j^{k_j}}{k_j!},\ \forall\ n\in \mathbb{N}.
\end{equation*}

\noindent {\small{\sc Example 4.}} (Composite nonlinearity) Adomian polynomials for $N(u)= e^{\sin{u}}$. Using (\ref{4.8}), $A_0=e^{\sin{u_0}}$ and
\begin{align*} 
A_1&=\frac{1}{2\pi}\int^\pi_{-\pi} e^{\sin{(u_0+u_1e^{i\lambda})}}e^{-i\lambda} \,d\lambda\\
&=\frac{e^{\sin{u_0}}}{2\pi}\int^\pi_{-\pi} e^{\left(\sin{u_1e^{i\lambda}}\cos{u_0}-2\sin^2{\frac{u_1e^{i\lambda}}{2}}\sin{u_0}\right)}e^{-i\lambda} \,d\lambda\\
&=\frac{e^{\sin{u_0}}}{2\pi}\int^\pi_{-\pi} \bigg[1+\bigg(\left\{u_1e^{i\lambda}-\ldots\right\}\cos{u_0}\\
&\ \ \ -2\left\{\frac{1}{2}u_1e^{i\lambda}-\ldots\right\}^2\sin{u_0}\bigg)+\ldots\bigg]e^{-i\lambda}\,d\lambda\\
&=u_1\cos{u_0}e^{\sin{u_0}},\\
A_2&=\frac{1}{2\pi}\int^\pi_{-\pi} e^{\sin{(u_0+u_1e^{i\lambda}+u_2e^{2i\lambda})}} e^{-2i\lambda}\,d\lambda\\
&=\frac{e^{\sin{u_0}}}{2\pi}\int^\pi_{-\pi} \left[1+\Bigg(\left\{(u_1e^{i\lambda}+u_2e^{2i\lambda})-\ldots\right\}\cos{u_0}\right.\\
&\ \ \ \left.-2\left\{\frac{1}{2}(u_1e^{i\lambda}+u_2e^{2i\lambda})-\ldots\right\}^2\sin{u_0}\right)+\frac{1}{2!}\Bigg(\left\{(u_1e^{i\lambda}+u_2e^{2i\lambda})-\ldots\right\}\cos{u_0}\\
&\ \ \ \left.\left.-2\left\{\frac{1}{2}(u_1e^{i\lambda}+u_2e^{2i\lambda})-\ldots\right\}^2\sin{u_0}\right)^2+\ldots\right]e^{-2i\lambda}\,d\lambda\\
&=\left(u_2\cos{u_0}-\frac{1}{2}u_1^2\sin{u_0}+\frac{1}{2}u_1^2\cos^2{u_0}\right)e^{\sin{u_0}}.
\end{align*}
Using Theorem \ref{t1} (v), we obtain
\begin{equation*}\label{5.13}
A_n(u_0,u_1,\ldots,u_n)=e^{\sin{u_0}}\underset{k_j\in\mathbb{N}_0}{\sum_{\sum_{j=1}^{n}jk_j=n}}\prod_{j=1}^{n}\frac{B_j^{k_j}}{k_j!},\ \forall\ n\in \mathbb{N},
\end{equation*}
\noindent where $B_n$ are Adomian polynomials of $\sin{u}$.\\

\noindent The next example is based on the second method.
\vspace{.15cm}\\
\noindent {\small{\sc Example 5.}} (Logarithmic function) Adomian polynomials for $N(u)= \ln{u}$. Obviously, $A_0=\ln{u_0}$, from (\ref{4.13}). Also, from (\ref{4.14}),
\begin{align*}
A_1&=\frac{1}{2\pi}\int^\pi_{-\pi} \ln({u_0 + u_1e^{i\lambda}}) e^{-i\lambda} \,d\lambda\\
&=\frac{1}{2\pi}\int^\pi_{-\pi} \ln{\left[u_0\left(1 +\frac{u_1e^{i\lambda}}{u_0}\right)\right]}e^{-i\lambda} \,d\lambda\\
&=\frac{1}{2\pi}\int^\pi_{-\pi} \left[\ln{u_0}+\left\{\frac{u_1e^{i\lambda}}{u_0}+\ldots \right\}\right] e^{-i\lambda} \,d\lambda\\ 
&=\frac{u_1}{u_0},\\
A_2&=\frac{1}{4\pi}\int^\pi_{-\pi} \frac{(u_1+2u_2e^{i\lambda})}{(u_0+u_1e^{i\lambda})}e^{-i\lambda}\,d\lambda\\
&=\frac{1}{4\pi}\int^\pi_{-\pi} \frac{(u_1+2u_2e^{i\lambda})}{u_0}\left(1+\frac{u_1e^{i\lambda}}{u_0}\right)^{-1}e^{-i\lambda} \,d\lambda\\
&=\frac{1}{4\pi}\int^\pi_{-\pi} \frac{(u_1+2u_2e^{i\lambda})}{u_0}\left\{1-\frac{u_1e^{i\lambda}}{u_0}+\ldots\right\}e^{-i\lambda}\,d\lambda\\
&=\frac{u_2}{u_0} -\frac{u_1^2}{2u_0^2},
\end{align*}
and etc. Indeed, from Theorem \ref{t1} (v), we get
\begin{equation*}\label{5.12}
A_n(u_0,u_1,\ldots,u_n)=\underset{k_j\in\mathbb{N}_0}{\sum_{\sum_{j=1}^{n}jk_j=n}}\frac{(-1)^{\sum_{j=1}^{n}k_j-1}\left(\sum_{j=1}^{n}k_j-1\right)!}{u_0^{\sum_{j=1}^{n}k_j}}\prod_{j=1}^{n}\frac{u_j^{k_j}}{k_j!},\ \forall\ n\in \mathbb{N}.
\end{equation*}

\begin{center}
{\bf 6. EXTENSION TO THE CASE OF SEVERAL VARIABLES}
\end{center}

We here extend our methods to calculate Adomian polynomials for the multivariable case. Consider the system of $m$ functional equations,
\begin{equation}\label{6.1}
u_j=f_j+L_j(u_1,u_2,\ldots,u_m)+N_j(u_1,u_2,\ldots,u_m),\ j=1,2,\ldots,m.
\end{equation}
Here, $L_j$'s and $N_j$'s are linear and nonlinear operators respectively and $f_j$'s are known functions. As assumed earlier, we shall suppose
\begin{itemize}
\item[$H3:$] Solution $u_j=\sum_{k=0}^{\infty}u_{j_k}$ of (\ref{6.1}) are absolutely convergent for $j=1,2,\ldots,m$.
\item[$H4:$]  The nonlinear function $N_j(u_1,u_2,\ldots,u_m)$ is developable into an entire series with infinite radius of convergence so that
\end{itemize} 
\begin{equation}\label{6.2}
N_j(u_1,u_2,\ldots,u_m)=\sum_{k_1=0}^{\infty}\sum_{k_2=0}^{\infty}\ldots\sum_{k_m=0}^{\infty}\frac{\partial^{k_1+k_2\ldots+k_m}N_j(0,0,\ldots,0)}{\partial^{k_1}u_1\partial^{k_2}u_2\ldots\partial^{k_m}u_m}\prod_{j=1}^{m}\frac{u_j^{k_j}}{k_j!},
\end{equation}
for all $1\leq j\leq m$. Since (\ref{6.2}) is absolutely convergent, it can be rearranged as
\begin{align}\label{6.3}
N_j(u_1,u_2,\ldots,u_m)&=‎‎\sum_{k=0}^{\infty‎}A_{j_k}(u_{1_0},\ldots,u_{1_k},u_{2_0},\ldots,u_{2_k},\ldots,u_{m_0},\ldots,u_{m_k})\ \nonumber \\
&=\sum_{k_1=0}^{\infty}\ldots\sum_{k_m=0}^{\infty}\frac{\partial^{k_1+\ldots+k_m}N_j(u_{1_0},\ldots,u_{m_0})}{\partial^{k_1}u_1\ldots\partial^{k_m}u_m}\prod_{j=1}^{m}\frac{(u_j-u_{j_0})^{k_j}}{k_j!}.
\end{align}
Parameterize $u_j(x,t)$ and its complex conjugate $\overline{u}_j(x,t)$ as follows:
\begin{equation*}\label{6.4}
u_{j_\lambda}=‎‎\sum_{k=0}^{\infty‎}u_{j_k}f^k(\lambda),\ \ \ \ \overline{u}_{j_\lambda}=‎‎\sum_{k=0}^{\infty‎}\overline{u}_{j_\lambda}f^k(\lambda)\ ,\ \ 1\leq j \leq m,
\end{equation*}
\noindent where $\lambda$ is a real parameter and $f$ is any real or complex valued function with $|f|<1$. Since the series (\ref{6.3}) is absolutely convergent, $N_j(u_{1_\lambda},u_{2_\lambda},\ldots,u_{m_\lambda})$ can be decomposed as
\begin{equation}\label{6.5}
N_j(u_{1_\lambda},u_{2_\lambda},\ldots,u_{m_\lambda})=‎‎\sum_{k=0}^{\infty‎}A_{j_k}(u_{1_0},\ldots,u_{1_k},\ldots,u_{m_0},\ldots,u_{m_k})f^k(\lambda).
\end{equation}
Taking $f(\lambda)=e^{i\lambda}$, the parametrized form of $u_j(x,t)$, for each $j$, is
\begin{equation}\label{6.6}
u_{j_\lambda}=‎‎\sum_{k=0}^{\infty‎}u_{j_k} e^{ik\lambda}
\end{equation}
and complex conjugates, $\overline{u}_j(x,t)$ are parametrized as $ \overline{u}_{j_\lambda}=‎‎\sum_{k=0}^{\infty‎}\overline{u}_{j_k} e^{ik\lambda}$. We first give the extended version of Theorem \ref{t2} for the multivariable case.
\begin{theorem}\label{t3}
Let the parametrized representation of $u_j(x,t)$, $1\leq j \leq m$, be given by (\ref{6.6}), where $\lambda$ is a real parameter and $N_j(u_1,u_2,\ldots,u_m)$ are the nonlinear terms in (\ref{6.1}). Then
\begin{align*}\label{6.7}
\int^\pi_{-\pi} N_j(u_{1_\lambda},u_{2_\lambda}&,\ldots,u_{m_\lambda}) e^{-in\lambda} \,d\lambda\nonumber\\&=\int^\pi_{-\pi} N_j\left(‎‎\sum_{k=0}^{n‎}u_{1_k} e^{ik\lambda},‎‎\sum_{k=0}^{n‎}u_{2_k} e^{ik\lambda},\ldots,‎‎\sum_{k=0}^{n‎}u_{m_k} e^{ik\lambda}\right) e^{-in\lambda} \,d\lambda.
\end{align*}
\end{theorem}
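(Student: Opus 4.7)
The plan is to adapt the proof of Theorem \ref{t2} to the multivariable setting, using hypotheses $H3$ and $H4$ to justify termwise integration and then the orthogonality relation (\ref{4.2}) to discard all but finitely many contributions. First I would expand $N_j(u_{1_\lambda},\ldots,u_{m_\lambda})$ via the multivariable Taylor representation (\ref{6.3}) centred at $(u_{1_0},\ldots,u_{m_0})$, writing each factor as $u_{\ell_\lambda}-u_{\ell_0}=\sum_{k=1}^{\infty}u_{\ell_k}e^{ik\lambda}$. By $H3$ one has $\sum_{k=0}^{\infty}|u_{\ell_k}|=M_\ell<\infty$ for every $\ell$, hence $|u_{\ell_\lambda}-u_{\ell_0}|\leq M_\ell$ uniformly in $\lambda$; combined with $H4$ (entire character of $N_j$), the multivariable Taylor series for $N_j$ is dominated in absolute value by the convergent numerical series obtained upon replacing each $|u_{\ell_\lambda}-u_{\ell_0}|$ by $M_\ell$. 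The Weierstrass $M$-test therefore yields uniform convergence of the expansion in $\lambda$.

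With uniform convergence in hand, I would interchange $\int_{-\pi}^{\pi}(\cdot)\,e^{-in\lambda}\,d\lambda$ with the summations. A typical term of the fully expanded series has the form $c(k_1,\ldots,k_m)\,\prod_{\ell=1}^{m}\prod_{s=1}^{k_\ell}u_{\ell_{p_{\ell,s}}}\cdot e^{iP\lambda}$, where $c$ is a partial derivative of $N_j$ at $(u_{1_0},\ldots,u_{m_0})$, $P=\sum_{\ell,s}p_{\ell,s}$, and each $p_{\ell,s}\geq 1$. After multiplication by $e^{-in\lambda}$ and integration, relation (\ref{4.2}) keeps only the terms with $P=n$; since the $p_{\ell,s}$ are positive integers summing to $n$, each satisfies $p_{\ell,s}\leq n$, and so only the components $u_{\ell_1},\ldots,u_{\ell_n}$ of each $u_\ell$ can appear in any surviving monomial.

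Now I would perform the same expansion on the right-hand side $N_j\!\left(\sum_{k=0}^{n}u_{1_k}e^{ik\lambda},\ldots,\sum_{k=0}^{n}u_{m_k}e^{ik\lambda}\right)$: the arguments are finite trigonometric sums, so absolute and uniform convergence are automatic and the same interchange of sum and integral is valid. The surviving terms agree \emph{term by term} with those of the full expansion, since the two expansions can differ only in monomials containing at least one factor $u_{\ell_k}$ with $k>n$, for which $P\geq n+1$ and the orthogonality (\ref{4.2}) forces a zero contribution. Hence the two integrals produce the same value, giving (\ref{6.7}). The main obstacle will be the multi-index bookkeeping needed to verify that every surviving monomial really involves only indices in $\{1,\ldots,n\}$; once this counting is done carefully, the analytic ingredients (majorization, uniform convergence, interchange, orthogonality) are a direct transplant of those established for Theorem \ref{t2}.
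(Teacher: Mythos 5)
Your proposal is correct and follows essentially the same route as the paper's proof: expand $N_j$ in its multivariable Taylor series about $(u_{1_0},\ldots,u_{m_0})$, dominate it using $\sum_k|u_{\ell_k}|=M_\ell<\infty$ (from $H3$) together with $H4$, invoke the Weierstrass $M$-test to interchange integration and summation, and then use the orthogonality relation (\ref{4.2}) to discard every monomial containing a component $u_{\ell_k}$ with $k>n$, which identifies the result with the truncated expression. The only cosmetic difference is that you expand each multi-index term all the way into individual monomials $e^{iP\lambda}$, whereas the paper applies (\ref{4.2}) directly at the level of the terms $\partial^\alpha N_j(\mathbf{u}_0)(\mathbf{u}_\lambda-\mathbf{u}_0)^\alpha/\alpha!$.
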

\begin{proof}
Let us consider $m$-tuple vectors $\alpha=(\alpha_1,\alpha_2,\ldots,\alpha_m)$, $\textbf{u}=(u_1,u_2,\ldots,u_m)$, $\textbf{u}_0=(u_{1_0},u_{2_0},\ldots,u_{m_0})$, $\textbf{u}_\lambda=\left(‎‎\sum_{k=0}^{\infty‎}u_{1_k} e^{ik\lambda},‎‎\sum_{k=0}^{\infty‎}u_{2_k} e^{ik\lambda},\ldots,‎‎\sum_{k=0}^{\infty‎}u_{m_k} e^{ik\lambda}\right)$\\ and $\textbf{u}_{n_\lambda}=\left(‎‎\sum_{k=0}^{n‎}u_{1_k} e^{ik\lambda},‎‎\sum_{k=0}^{n‎}u_{2_k} e^{ik\lambda},\ldots,\right.$ $\left.‎‎\sum_{k=0}^{n‎}u_{m_k} e^{ik\lambda}\right)$.\\ Also, denote $|\alpha|=\sum_{k=1}^m\alpha_k$, $\textbf{u}^\alpha=\prod_{k=1}^m u_k^{\alpha_k}$, $\alpha!=\prod_{k=1}^m\alpha_k!$ and $\partial^\alpha=\prod_{k=1}^m\frac{\partial^{\alpha_k}}{\partial u^{\alpha_k}_k}$. From $H3$, $\sum_{k=1}^{\infty‎}|u_{j_k}|=M_j<\infty$ for $j=1,2,\ldots,m$ and therefore\\ 
\begin{equation*}\label{6.8}
\left|\partial^\alpha N_j(\textbf{u}_0)\frac{(\textbf{u}_\lambda-\textbf{u}_0)^\alpha}{\alpha!}\right|\leq
\left|\frac{\partial^\alpha N_j(\textbf{u}_0)}{\alpha!}\right|\textbf{M}^{\alpha},
\end{equation*}
where $\textbf{M}=(M_1,M_2,\ldots,M_m)$. Using $H4$, $\sum_{|\alpha|\geq0}\left|\frac{\partial^\alpha N_j(\textbf{u}_0)}{\alpha!}\right|\textbf{M}^{\alpha}<\infty$. Hence, by Weierstrass M-test, $$\sum_{|\alpha|\geq0}\partial^\alpha N_j(\textbf{u}_0)\frac{(\textbf{u}_\lambda-\textbf{u}_0)^\alpha}{\alpha!}$$ converges uniformly. Hence, for $n\in\mathbb{N}_0$ and using (\ref{6.3}), we get
\begin{align*}\label{new2}
\int^\pi_{-\pi} N_j(\textbf{u}_\lambda) e^{-in\lambda} \,d\lambda
&=\int^\pi_{-\pi}\sum_{|\alpha|\geq0}\partial^\alpha N_j(\textbf{u}_0)\frac{(\textbf{u}_\lambda-\textbf{u}_0)^\alpha}{\alpha!} e^{-in\lambda} \,d\lambda\nonumber\\
%&=\int^\pi_{-\pi}\underset{m\rightarrow \infty}{\lim}\sum_{|\alpha|=m}\partial^\alpha N_j(\textbf{u}_0)\frac{(\textbf{u}_\lambda-\textbf{u}_0)^\alpha}{\alpha!}  e^{-in\lambda} \,d\lambda\nonumber\\
%&=\underset{m\rightarrow \infty}{\lim}\int^\pi_{-\pi}\sum_{|\alpha|=m}\partial^\alpha N_j(\textbf{u}_0)\frac{(\textbf{u}_\lambda-\textbf{u}_0)^\alpha}{\alpha!}  e^{-in\lambda} \,d\lambda.\\
&=\underset{m\rightarrow \infty}{\lim}\sum_{|\alpha|=m}\int^\pi_{-\pi}\partial^\alpha N_j(\textbf{u}_0)\frac{(\textbf{u}_{n_\lambda}-\textbf{u}_0)^\alpha}{\alpha!}  e^{-in\lambda} \,d\lambda,\ \ \mathrm{(by\ (\ref{4.2}))}\\
%&=\underset{m\rightarrow \infty}{\lim}\int^\pi_{-\pi}\sum_{|\alpha|=m}\partial^\alpha N_j(\textbf{u}_0)\frac{(\textbf{u}_{n_\lambda}-\textbf{u}_0)^\alpha}{\alpha!}  e^{-in\lambda} \,d\lambda\\
%&=\int^\pi_{-\pi}\underset{m\rightarrow \infty}{\lim}\sum_{|\alpha|=m}\partial^\alpha N_j(\textbf{u}_0)\frac{(\textbf{u}_{n_\lambda}-\textbf{u}_0)^\alpha}{\alpha!}  e^{-in\lambda} \,d\lambda\\
&=\int^\pi_{-\pi}\sum_{|\alpha|\geq0}\partial^\alpha N_j(\textbf{u}_0)\frac{(\textbf{u}_{n_\lambda}-\textbf{u}_0)^\alpha}{\alpha!} e^{-in\lambda} \,d\lambda\\
&=\int^\pi_{-\pi} N_j(\textbf{u}_{n_\lambda}) e^{-in\lambda} \,d\lambda,
\end{align*}
and thus the proof is complete using (\ref{6.3}).
\end{proof}

\subsection*{6.1. Extension of the First Method}

\hspace{.75cm} Taking $f(\lambda)=e^{i\lambda}$, we have from (\ref{6.5}),
\begin{equation}\label{6.9}
N_j(u_{1_\lambda},u_{2_\lambda},\ldots,u_{m_\lambda})=‎‎\sum_{k=0}^{\infty‎}A_{j_k}(u_{1_0},\ldots,u_{1_k},\ldots,u_{m_0},\ldots,u_{m_k}) e^{ik\lambda},
\end{equation}
for $j=1,2,\ldots,m$. To determine $A_{j_n}$, multiply $e^{-in\lambda}$ in (\ref{6.9}) and integrate both sides with respect to $\lambda$ from $-\pi$ to $\pi$, to get
\begin{align}\label{6.10}
\int^\pi_{-\pi} N_j\bigg(‎‎\sum_{k=0}^{\infty‎}u_{1_k} e^{ik\lambda},‎‎\sum_{k=0}^{\infty‎}u_{2_k} e^{ik\lambda}&,\ldots,‎‎\sum_{k=0}^{\infty‎}u_{m_k} e^{ik\lambda}\bigg) e^{-in\lambda}\,d\lambda\nonumber\\&=\int^\pi_{-\pi} ‎‎\sum_{k=0}^{\infty‎}A_{j_k} e^{ik\lambda} e^{-in\lambda} \,d\lambda=2\pi A_{j_n}.
\end{align}
The last equality in (\ref{6.10}) follows due to the uniform convergence of the series $\sum_{k=0}^{\infty‎}A_{j_k} e^{i(k-n)\lambda}$. Hence,
\begin{equation*}\label{6.11}
A_{j_n}(u_{1_0},\ldots,u_{m_n})=\frac{1}{2\pi}\int^\pi_{-\pi} N_j\left(‎‎\sum_{k=0}^{\infty‎}u_{1_k} e^{ik\lambda},\ldots,‎‎\sum_{k=0}^{\infty‎}u_{m_k} e^{ik\lambda}\right) e^{-in\lambda} \,d\lambda.
\end{equation*}
Applying Theorem \ref{t3}, we get for $j=1,2,\ldots,m$ and $n\in \mathbb{N}_0$,
\begin{equation}\label{6.12}
A_{j_n}(u_{1_0},\ldots,u_{m_n})=\frac{1}{2\pi}\int^\pi_{-\pi} N_j\left(‎‎\sum_{k=0}^{n‎}u_{1_k} e^{ik\lambda},\ldots,‎‎\sum_{k=0}^{n‎}u_{m_k} e^{ik\lambda}\right) e^{-in\lambda} \,d\lambda. 
\end{equation}

\noindent {\small{\sc Example 6.}} Consider the set of nonlinear equations
\begin{equation*}
N_j(u_1,u_2,u_3)=u_1\frac{\partial u_j}{\partial x}+u_2\frac{\partial u_j}{\partial y}+u_3\frac{\partial u_j}{\partial z},\ j=1,2,3.
\end{equation*}

These nonlinear terms appear in the Navier Stokes equation for an incompressible fluid flow defined by
\begin{equation}\label{6.16}
\frac{\partial V}{\partial t}+(V.\nabla)V=\frac{\eta}{\rho}\Delta v-\frac{1}{\rho}\nabla p.
\end{equation}
\noindent Here $x,y,z$ are spatial components, $t$ is the temporal component, $\eta$ denotes dynamic viscosity, $\rho$ denotes density, $\nu=\eta/\rho$ is the kinematic viscosity and $V=(u_1,u_2,u_3)$ denotes the speed vector. Using prevalent methods, {\small{\sc Seng}} et al. \cite{Seng199659} computed the Adomian polynomials for the nonlinear term $(V.\nabla)V$ in (\ref{6.16}), using tedious computations.

By our extended first method, we calculate $A_n$ with a few steps. From (\ref{6.12}), Adomian polynomials $A_{j_n}$ for $j=1,2,3$ are
\begin{align*}
A_{j_0}&=u_{1_0}\frac{\partial u_{j_0}}{\partial x}+u_{2_0}\frac{\partial u_{j_0}}{\partial y}+u_{3_0}\frac{\partial u_{j_0}}{\partial z},\\
A_{j_1}&=\frac{1}{2\pi}\int^\pi_{-\pi} \left[(u_{1_0}+u_{1_1}e^{i\lambda})\frac{\partial (u_{j_0}+u_{j_1}e^{i\lambda})}{\partial x}+(u_{2_0}+u_{2_1}e^{i\lambda})\frac{\partial (u_{j_0}+u_{j_1}e^{i\lambda})}{\partial y}\right.\\
&\ \ \ \left.+(u_{3_0}+u_{3_1}e^{i\lambda})\frac{\partial (u_{j_0}+u_{j_1}e^{i\lambda})}{\partial z} \right]e^{-i\lambda}\,d\lambda\\
&=u_{1_0}\frac{\partial u_{j_1}}{\partial x}+u_{1_1}\frac{\partial u_{j_0}}{\partial x}+u_{2_0}\frac{\partial u_{j_1}}{\partial y}+u_{2_1}\frac{\partial u_{j_0}}{\partial y}+u_{3_0}\frac{\partial u_{j_1}}{\partial z}+u_{3_1}\frac{\partial u_{j_0}}{\partial z},\\
A_{j_2}&=\frac{1}{2\pi}\int^\pi_{-\pi} \left[(u_{1_0}+u_{1_1}e^{i\lambda}+u_{1_2}e^{2i\lambda})\frac{\partial (u_{j_0}+u_{j_1}e^{i\lambda}+u_{j_2}e^{2i\lambda})}{\partial x}\right.\\
&\ \ \ \left.+(u_{2_0}+u_{2_1}e^{i\lambda}+u_{2_2}e^{2i\lambda})\frac{\partial (u_{j_0}+u_{j_1}e^{i\lambda}+u_{j_2}e^{2i\lambda})}{\partial y}\right.\\
&\ \ \ \left.+(u_{3_0}+u_{3_1}e^{i\lambda}+u_{3_2}e^{2i\lambda})\frac{\partial (u_{j_0}+u_{j_1}e^{i\lambda}+u_{j_2}e^{2i\lambda})}{\partial z} \right]e^{-2i\lambda} \,d\lambda\\
&=u_{1_0}\frac{\partial u_{j_2}}{\partial x}+u_{1_1}\frac{\partial u_{j_1}}{\partial x}+u_{1_2}\frac{\partial u_{j_0}}{\partial x}+u_{2_0}\frac{\partial u_{j_2}}{\partial y}\\
&\ \ \ +u_{2_1}\frac{\partial u_{j_1}}{\partial y}+u_{2_2}\frac{\partial u_{j_0}}{\partial y}+u_{3_0}\frac{\partial u_{j_2}}{\partial z}+u_{3_1}\frac{\partial u_{j_1}}{\partial z}+u_{3_2}\frac{\partial u_{j_0}}{\partial z}.
\end{align*}
Thus, the $n$-th order Adomian polynomials for $j=1,2,3$ are given by
\begin{equation*}
A_{j_n}(u_{1_0},\ldots,u_{1_n},\ldots,u_{3_0},\ldots,u_{3_n})=\sum_{(k,w)\in\{(1,x),(2,y),(3,z)\}}‎‎\underset{a,b\in\mathbb{N}_0}{\sum_{a+b=n}}u_{k_a}\frac{\partial u_{j_b}}{\partial w},\ \forall\ n\in\mathbb{N}_0.
\end{equation*}

\subsection*{6.2. Extension of the Second Method}

\hspace{.75cm} The Adomian polynomials can be calculated recursively for the multivariable case also. {\small{\sc Duan}} \cite{Duan20102456} introduced the simplified index matrices of the multivariable Adomian polynomials and established a recurrence relationships among them to provide a convenient recursive algorithm.

Based on our approach, we give a new recursive method to obtain these polynomials for the multivariable case. We define an operator $T$ as
\begin{eqnarray}\label{6.13}
&&T\left(A_{j_n}(u_{1_0},\ldots,u_{1_n},u_{2_0},\ldots,u_{2_n},\ldots,u_{m_0},\ldots,u_{m_n})\right)\\&&~~~~~~~~~~=\frac{1}{2\pi}\int^\pi_{-\pi} A_{j_{n}}(v_{1_0},\ldots,v_{1_{n}},v_{2_0},\ldots,v_{2_{n}},\ldots,v_{m_0},\ldots,v_{m_{n}})e^{-i\lambda} \,d\lambda,\nonumber 
\end{eqnarray}
where\ $v_{j_k}=u_{j_k}+(k+1)u_{j_{k+1}}e^{i\lambda}$ and $\overline{v}_{j_k}=\overline{u}_{j_k}+(k+1)\overline{u}_{j_{k+1}}e^{i\lambda}$, $k\in\{0,1,2,\ldots,n\}$.\\
From (\ref{6.12}), we get for $j=1,2,\ldots,m$,
\begin{equation}\label{6.14}
A_{j_0}(u_{1_0},u_{2_0},\ldots,u_{m_0})= N_j(u_{1_0},u_{2_0},\ldots,u_{m_0}).
\end{equation} 

Note that operator $T$ defined in (\ref{6.13}) satisfies all the properties of Proposition \ref{l4.1}. Therefore, by applying (\ref{4.12}), we get the following recursive formula for $A_{j_n}\ (1\leq j\leq m,\ n\in\mathbb{N})$:
\begin{equation}\label{6.15}
A_{j_n}(u_{1_0},\ldots,u_{m_n})=\frac{1}{2n\pi}\int^\pi_{-\pi} A_{j_{n-1}}(v_{1_0},\ldots,v_{m_{n-1}})e^{-i\lambda} \,d\lambda, 
\end{equation}
where $v_{j_k}=u_{j_k}+(k+1)u_{j_{k+1}}e^{i\lambda}$ and $\overline{v}_{j_k}=\overline{u}_{j_k}+(k+1)\overline{u}_{j_{k+1}}e^{i\lambda}$, $k\in\{0,1,\ldots,n-1\}$.\\

\noindent {\small{\sc Example 7.}} Adomian polynomials for $N(u)=\overline{u}_1u_2\frac{\partial u_2}{\partial x}$. From (\ref{6.14}), we have $A_0=\overline{u}_{1_0}u_{2_0}\frac{\partial u_{2_0}}{\partial x}$. Now by using (\ref{6.15}),
\begin{align*}
A_1&=\frac{1}{2\pi}\int^\pi_{-\pi}(\overline{u}_{1_0} + \overline{u}_{1_1}e^{i\lambda})(u_{2_0} + u_{2_1}e^{i\lambda})\frac{\partial (u_{2_0} + u_{2_1}e^{i\lambda})}{\partial x} e^{-i\lambda} \,d\lambda\\
&=\overline{u}_{1_0}u_{2_0}\frac{\partial u_{2_1}}{\partial x}+\overline{u}_{1_0}u_{2_1}\frac{\partial u_{2_0}}{\partial x}+\overline{u}_{1_1}u_{2_0}\frac{\partial u_{2_0}}{\partial x},\\
A_2&=\frac{1}{4\pi}\int^\pi_{-\pi}\left[(\overline{u}_{1_0} + \overline{u}_{1_1}e^{i\lambda})(u_{2_0} + u_{2_1}e^{i\lambda})\frac{\partial (u_{2_1} + 2u_{2_2}e^{i\lambda})}{\partial x}
\right.\\
&\ \ \ +(\overline{u}_{1_0} + \overline{u}_{1_1}e^{i\lambda})(u_{2_1} + 2u_{2_2}e^{i\lambda})\frac{\partial (u_{2_0} + u_{2_1}e^{i\lambda})}{\partial x}\\
&\ \ \ \left.+(\overline{u}_{1_1} + 2\overline{u}_{1_2}e^{i\lambda})(u_{2_0} + u_{2_1}e^{i\lambda})\frac{\partial (u_{2_0} + u_{2_1}e^{i\lambda})}{\partial x}\right]e^{-i\lambda} \,d\lambda\\
&=\overline{u}_{1_0}u_{2_0}\frac{\partial u_{2_2}}{\partial x}+\overline{u}_{1_0}u_{2_1}\frac{\partial u_{2_1}}{\partial x}+\overline{u}_{1_1}u_{2_0}\frac{\partial u_{2_1}}{\partial x}\\
&\ \ \ +\overline{u}_{1_0}u_{2_2}\frac{\partial u_{2_0}}{\partial x}+\overline{u}_{1_1}u_{2_1}\frac{\partial u_{2_0}}{\partial x}+\overline{u}_{1_2}u_{2_0}\frac{\partial u_{2_0}}{\partial x}.
\end{align*}
Thus, the $n$-th order Adomian polynomials are given by
\begin{equation*}\label{6.17}
A_n(u_{1_0},\ldots,u_{1_n},u_{2_0},\ldots,u_{2_n}) =‎‎\underset{a,b\in\mathbb{N}_0}{\sum_{a+b+c=n}}\overline{u}_{1_a}u_{2_b}\frac{\partial u_{2_c}}{\partial x},\ \forall\ n\in\mathbb{N}_0.
\end{equation*}

\noindent {\small{\sc Remark 3.}}
The recursive algorithm obtained here is based on a simple integration, which on using (\ref{4.2}) conveniently produces Adomian polynomials. More analytic recursive algorithms based on regular operations such as addition, multiplication and differentiation were obtained by {\small{\sc Duan}} \cite{Duan20116337}.

\begin{center}
{\bf 7. CONCLUSIONS}
\end{center}

The crucial step involved in Adomian decomposition method is the employment of the ``Adomian polynomials". The computation of $n$-th order Adomian polynomial is difficult, as it requires tedious calculations. In this paper, we have discussed some important properties of Adomian polynomials and developed two simple methods which avoid draggy calculation of higher derivatives involved in prevalent methods. Another advantage is that at every stage we don't have to keep track of sum of the indices of components of $u(x,t)$ (see {\small{\sc Wazwaz}} \cite{Wazwaz200033}). Also, the second algorithm is efficient in cases where Taylor series expansion is required, as for example in case of exponential, logarithmic and trigonometric nonlinearity, and it just requires the first two terms of the Taylor series expansion. We have illustrated our approach using typical examples.\\

\noindent \textbf{Acknowledgements.}
The authors wish to thank two anonymous referees for providing insightful comments and suggestions. The research of the first author was supported by UGC, Govt. of India grant F.2-2/98(SA-I).

%Body of the text here. Sections should be numbered consecutively, their titles centered.
%
%References in the text should read like e.g. Baker[1]  rather than
%Baker(1971).
%
%Numbered formulae should be recalled in the text (prepared by \LaTeX) using their labels,
%rather than numbers. Formulae which are not recalled in the paper
%should not be numbered.

\vspace{2cc}

\vspace{1cc}

 %Fill author(s) affiliation(s), address(es) and emails here:

{\small
\noindent
Department of Mathematics,\\
Indian Institute of Technology Bombay,\\
Powai, Mumbai, MH 400076\\
India\\
E-mails: kulkat@math.iitb.ac.in\\
\hspace*{1.2cm} pv@math.iitb.ac.in

}\end{document}